
\documentclass[11pt]{article}
\usepackage{fullpage}
\usepackage{amsmath}
\usepackage{amssymb}
\usepackage{amsthm}
\usepackage{comment}
\usepackage{paralist}

\newcommand{\ens}[1]{\ensuremath{#1}}

\newcommand{\bigoh}[1]{\ens{\mathcal{O}(#1)}}
\newcommand{\bigom}[1]{\ens{\Omega(#1)}}

\newcommand{\varbigoh}[1]{\ens{\mathcal{O}\left(#1\right)}}
\newcommand{\varbigom}[1]{\ens{\Omega\left(#1\right)}}
\newcommand{\varthet}[1]{\ens{\Theta\left(#1\right)}}

\newcommand{\ozc}[1]{\textcolor{Red}{$\rightarrow$#1$\leftarrow$}}
\renewcommand{\ozc}[1]{}

\newtheorem{theorem}{Theorem}
\newtheorem{lemma}[theorem]{Lemma}
\newtheorem{definition}[theorem]{Definition}

\usepackage{multibib}
\newcites{shortbib}{References}
\newcites{fullbib}{References}

\usepackage{graphicx}
\usepackage[usenames,dvipsnames]{color}
\usepackage{rotating}
\usepackage{algpseudocode}
\usepackage{algorithm}
\usepackage{url}


\allowdisplaybreaks

\newtheorem{obs}[theorem]{Observation}

\newcommand{\superscript}[1]{\ensuremath{^{\textrm{#1}}}}

\newcommand{\johnc}[1]{{\sc \textcolor{red}{(xiii) #1}}}
\newcommand{\johnd}[1]{{\sc \textcolor{red}{(xiv) #1}}}
\renewcommand{\johnc}[1]{}
\renewcommand{\johnd}[1]{}



\newwrite\notationfile
\openout\notationfile=\jobname.notation

\newcommand{\jnc}[3]{\write\notationfile{$\mathrm{\string\string\string#1}$ & \ensuremath{#2} & #3 \string\\}\newcommand{#1}{\ensuremath{#2}}}



\newcommand{\opIns}{\ensuremath{\text{\sc Insert}}}
\newcommand{\opEm}{\ensuremath{\text{\sc Extract-Min}}}
\newcommand{\opDc}{\ensuremath{\text{\sc Decrease-Key}}}
\newcommand{\opMg}{\ensuremath{\text{\sc Meld}}}

\newcommand{\eIns}{\ensuremath{\text{\sc Evolve-Insert}}}
\newcommand{\eDc}{\ensuremath{\text{\sc Evolve-Decrease-Key}}}
\newcommand{\eDmr}{\ensuremath{\text{\sc Evolve-Designated-Minimum-Root}}}
\newcommand{\eEm}{\ensuremath{\text{\sc Evolve-Extract-Min}}}
\newcommand{\eBs}{\ensuremath{\text{\sc Evolve-Big-Small}}}
\newcommand{\ePerm}{\ensuremath{\text{\sc Evolve-Permute}}}

\newcommand{\tent}[1]{\overline{#1}}

\newcommand{\subop}[2]{{\sc #1(\ensuremath{#2})}}



\newcommand{\cst}{\frac{n  }{\rcf {} \log n}} 
\newcommand{\pp}{++}

\jnc{\Ids}{\ensuremath{\Xi}}{?}
\jnc{\Distinct}{\ensuremath{\xi}}{?}
\jnc{\opa}{\ensuremath{a}}{Description}
\jnc{\opb}{\ensuremath{b}}{Description}
\jnc{\opc}{\ensuremath{c}}{Description}
\jnc{\opA}{\ensuremath{A}}{Description}
\jnc{\opB}{\ensuremath{B}}{Description}
\jnc{\opC}{\ensuremath{C}}{Description}
\jnc{\subops}{\eta}{Description}
\jnc{\pointers}{\rho}{Description}
\jnc{\pointer}{p}{Description}
\jnc{\state}{s}{Description}
\jnc{\rank}{r}{Description}
\jnc{\alg}{\mathcal{A}}{The algorithm with $o(\log \log n)$-time decrease key which is being assumed to exist.}
\jnc{\cem}{\alpha}{The (maximum of) the constants hidden in the big-O of the $O(\log n)$-time \opEm\ and \opIns\ operations. }
\jnc{\rce}{j(n)}{Description}
\jnc{\rcf}{f(n)}{Description}
\jnc{\rcg}{g(n)}{Description}
\jnc{\rch}{h}{Maximum fraction of the time spent doing fix-up}
\jnc{\rcl}{\ell}{Amortized runtime is $\ell \log n$ per round}
\jnc{\fdc}{dc(n)}{Cost of decrease-key}
\jnc{\vio}{v}{Description}
\jnc{\akey}{x}{A key value or node}
\jnc{\theset}{\mathcal{S}}{Set of items stored by the data structure}
\jnc{\acem}{\ensuremath{e}}{...}
\jnc{\dcc}{dc}{Number of \opDc\ performed in a permutation evolution in round $i$}
\jnc{\vs}{v}{Size of the violation sequence of the \opEm\ in round $i$}
\jnc{\csmall}{z}{At least $z$ of the rounds are small}
\jnc{\rootsize}{m}{The size of a particular root}
\jnc{\noninc}{N}{Nonincremental sibling set}
\jnc{\bigroot}{r}{Location of largest root}
\jnc{\rcd}{{\ensuremath{j(n)}}}{Used in rank definition}
\jnc{\round}{\circ}{Round}
\jnc{\umc}{k}{Number of unmarked children}
\jnc{\Vio}{V}{Violation set: those nodes that become marked in a extract-min evolution}
\jnc{\Classification}{C}{Classification in a big-small evolution}
\jnc{\Evo}{\Psi}{A sequence of evolutions}
\jnc{\evo}{\psi}{One evolution in a sequence of evolutions}
\jnc{\runb}{R(B)}{Runtime to execute operation sequence $B$}
\jnc{\rcm}{m}{Constant involved in size of number of decrease-key operations in small round}
\jnc{\ozwin}{\ozpexp}{Constant for the range of things that are efficient}
\jnc{\ozwinpo}{(\ozpexp+1)}{\ozwin{} plus 1}
\jnc{\nozwinmo}{\ozpexp-1}{Negative \ozwin{} minus 1}
\jnc{\ozexp}{1+ \log (32\cem+32\rcm)}{Constant for the exponent of \rcd{}}
\jnc{\ozpexp}{c}{Placeholder for \ozexp}
\jnc{\ftop}{\rcd^{\ozpexp}}{Shortcut}
\jnc{\aitsd}{\text{AI-TSD}}{AI-TSD}
\closeout\notationfile


%
%
%
%
%
%


\title{A Tight Lower Bound for Decrease-Key in the Pure Heap Model}

\author{John Iacono\thanks{Research supported by NSF Grant CCF-1018370.} \and \"Ozg\"ur \"Ozkan}


\date{{\sc New York University}\\Brooklyn, New York, USA}


\makeatother


\begin{document}

\maketitle

\begin{abstract}
	
We improve the lower bound on the amortized cost of the decrease-key operation in the pure heap model 
and show that any pure-heap-model heap (that has a \bigoh{\log n} amortized-time extract-min operation) must spend \bigom{\log\log n} amortized time on the decrease-key operation.  
Our result shows that sort heaps as well as pure-heap variants of numerous other heaps have asymptotically optimal decrease-key operations in the pure heap model. 
In addition, our improved lower bound matches the lower bound of Fredman
[J. ACM 46(4):473-501 (1999)]
for pairing heaps
[M.L. Fredman, R. Sedgewick, D.D. Sleator, and R.E. Tarjan. Algorithmica 1(1):111-129 (1986)]
 and surpasses it for pure-heap variants of numerous other heaps with augmented data such as pointer rank-pairing heaps. 
\end{abstract}

\section{Introduction} 
\label{sec:introduction}
A \emph{priority queue} is a fundamental data structure that supports the following operations to maintain a totally ordered set\footnote{The key values can be in any form so long as they are totally ordered and an $O(1)$-time comparison function is provided. This is more permissive than saying they they are comparison-based. We do not restrict algorithms from doing things like making decisions based on individual bits of a key. Some priority queues, including Fibonacci heaps and our sort heaps, also efficiently support the \opMg\ operation where two priority queues are combined into one.
} \theset:
$\pointer =\ $\opIns$(\akey)$: inserts the key $\akey$ into $\theset$ and returns a pointer $p$ used to perform future \opDc\ operations on this key,
$\akey =\ $\opEm$()$: removes the minimum item in $\theset$ from $\theset$ and returns it,
and \opDc$(\pointer, \Delta \akey)$: reduces the key value pointed to by $\pointer$ by some non-negative amount $\Delta \akey$.

The fast execution of \opDc\ is vital to the runtime of several algorithms, most notably Dijkstra's algorithm \cite{springerlink:10.1007/BF01386390}. The constant-amortized-time implementation of the \opDc\ operation is the defining feature of the \emph{Fibonacci heap} \cite{DBLP:journals/jacm/FredmanT87}. 

However, Fibonacci heaps are not pointer model structures in the standard textbook presentation \cite{clrs} for the following reason: each node is augmented with a $\log \log n$-bit integer (in the range $1\ldots \log n$), and in the implementation of \opEm\ it is required to separate a number of nodes, call it $k$, into groups of nodes with like number. This is done classically using bucket sort in time \bigoh{k+\log n}. 
Bucket sort using indirect addressing to access each bucket is distinctly not allowed in the pointer model; realizing this, there is a note in the original paper \cite{DBLP:journals/jacm/FredmanT87}  showing how by adding another pointer from every node to a node representing its bucket, bucket sort can be simulated at no additional asymptotic cost. However, these pointers are non-heap pointers to nodes with huge in-degree which store no keys and thus violate the tree requirement of the pure heap model, as well as the spirit of what we usually think of as a heap. 

There have been several alternatives to Fibonacci heaps presented with the same amortized runtimes which are claimed to be simpler than the original~\cite{DBLP:journals/talg/KaplanT08,DBLP:journals/dmaa/Elmasry10,DBLP:journals/siamcomp/HaeuplerST11}.

All of these heaps also use indirect addressing or non-heap pointers. The rank-pairing heap~\cite{DBLP:journals/siamcomp/HaeuplerST11} has the cleanest implementation of all of them, implementing \opDc\ by just disconnecting the node from it parents and not employing anything more complicated like the cascading cuts of Fibonacci heaps.
We can easily modify Fibonacci heaps and the aforementioned alternatives to only use heap pointers by simply 
using \bigoh{\log \log n} time to determine which of the \bigoh{\log n} buckets each of the $k$ keys are in. We call such a variant of rank-pairing heaps a \emph{pointer rank-pairing} heap. However, this alteration increases the time of \opDc\ in  Fibonacci heaps and their variants to \bigoh{\log \log n}. 

In \cite{DBLP:journals/algorithmica/FredmanSST86}, a new heap called the \emph{pairing heap} was introduced. The pairing heap is a self-adjusting heap, whose design and basic analysis closely follows that of splay trees \cite{DBLP:journals/jacm/SleatorT85}. They are much simpler in design than Fibonacci heaps, and they perform well in practice. 
It was conjectured at the time of their original presentation that pairing heaps had the same \bigoh{\log n} amortized time \opEm, and \bigoh{1} amortized time \opIns\ and \opDc\ as Fibonacci heaps, however, at their inception only a \bigoh{\log n} amortized bound was proven for all three operations. 
Iacono has shown a working-set like runtime bound \cite{DBLP:conf/swat/Iacono00}, and also that \opIns\ does in fact have \bigoh{1} amortized time \cite{DBLP:journals/corr/abs-1110-4428,DBLP:conf/swat/Iacono00}. 
However in \cite{DBLP:journals/jacm/Fredman99}, Fredman refuted the conjectured constant-amortized-time \opDc\ in pairing heaps by proving that pairing heaps have a lower bound of \bigom{\log \log n} on the \opDc\ operation. The result he proved was actually more general: he created a model of heaps which includes both pairing heaps and Fibonacci heaps, and produced a tradeoff between the number of bits of data augmented and a lower bound on the runtime of \opDc. Pairing heaps have no augmented bits, and were shown to have a \bigom{\log \log n} amortized lower bound on \opDc\ while in his model a \bigoh{1} \opDc\ requires \bigom{\log \log n} bits of augmented information per node, which is the number of bits of augmented information used by Fibonacci heaps and variants
 (a number called rank, which is an integer with logarithmic range is stored in each node). 
 More recently, Pettie has shown a upper bound of \bigoh{2^{2 \sqrt{\log \log n}}}  for \opDc\ in pairing heaps \cite{DBLP:conf/focs/Pettie05}. It remains open where in $\bigom{\log \log n} \ldots \bigoh{2^{2 \sqrt{\log \log n}}}$ the cost of \opDc\ in a pairing heap lies.
Elmasry has shown that a simple variant of pairing heaps has \bigoh{\log \log n} amortized time \opDc\ operation \cite{DBLP:conf/esa/Elmasry10,DBLP:conf/soda/Elmasry09}. However, 
this variant is not in Fredman's model and thus the \bigom{\log \log n} lower bound does not apply.

Recently, a new model of pointer model heaps called the \textit{pure heap} model was introduced~\cite{DBLP:conf/icalp/IaconoO14}, along with a variant of Elmasry's heap in the model called the sort heap which supports the \opDc\ operation in \bigoh{\log \log n} amortized time. 
A pure heap is a pointer-based forest of rooted trees, each node holding one key, that obeys the heap property (the key of the source of every pointer is smaller than the key of the destination); the nodes may be augmented, and all operations must be valid in the pointer model; heap pointers are only removed if one of the nodes is removed or if a \opDc\ is performed on the node that a heap pointer points to. 
This model is both simple and captures the spirit of many heaps  like the pairing heap, and is meant to be a clean definition analogous to that of the well-established binary search tree (BST) model \cite{DBLP:journals/siamcomp/Wilber89}.  

It was shown in~\cite{DBLP:conf/icalp/IaconoO14} that any heap in the pure heap model that has \bigoh{\log n} amortized-time \opEm{} and \opIns{} operations must spend \varbigom{\frac{\log\log n}{\log\log\log n}} amortized time on the \opDc{} operation, \emph{no matter how many bits of data each node is augmented with}. 
We improve this result by showing that any heap in the pure heap model that has \bigoh{\log n} amortized-time \opEm{} and \opIns{} operations must in fact spend \bigom{\log \log n} amortized time on the \opDc{} operation.
This bound is asymptotically tight as it matches the upper bounds for numerous heaps including pointer rank-pairing heaps and variants, and sort heaps. 
It also matches (or surpasses in the case of pointer rank-pairing heaps) the lower bounds achieved in Fredman's model.
In this view, Fibonacci heaps are a typical RAM-model structure that squeeze out a $\log \log n$ factor over the best structure in a natural pointer-based model by beating the sorting bound using bit tricks (of which bucket sort is a very primitive example). 
Thus, Fibonacci heaps have fast \opDc\ is not (only) because of the augmented bits as Fredman's result suggests, but rather because they depart from the pure heap model.

We review the pure heap model in Section~\ref{sec:the_pure_heap_model} and present the improved lower bound in Section~\ref{sec:lower_bound}.


\section{The Pure Heap Model} 
\label{sec:the_pure_heap_model}

Here we define the \emph{pure heap} model, and how priority queue operations on data structures in this model are executed in it.
The pure heap model requires that at the end of every operation, the data structure is an ordered forest of general heaps. 
Each node is associated with a key $\akey \in \theset$. We will use $x$ both to refer to a key and the node in the heap containing the key. Inside each node is stored the key value, pointers to the parent, leftmost child and right sibling of the node, along with other possible augmented information.
The \emph{structure} of the heap is the shape of the forest, without regard to the contents of the nodes. The \emph{position} of a node is its position in the forest of heaps relative to the right. The idea is that location is invariant under adding new siblings to the left.

An algorithm in the pure heap model implements the priority queue operations as follows:
\opIns\ operations are executed at unit cost by adding the new item as a new leftmost heap. 
\opDc\ is executed at unit cost by disconnecting from its parent the node containing the key to be decreased (if it is not a root), decreasing the key, and then placing it as the leftmost root in the forest of heaps. 
An \opEm\ operation is performed by first executing a sequence of pointer-based suboperations which are fully described below.
After executing the suboperations, the forest is required to be monoarboral (i.e.~have only one heap). 
Thus, the root of this single tree has as its key the minimum key in $\theset$. This node is then removed, the key value is returned, and its children become the new roots of the forest. The cost is the number of suboperations performed. 

Note that some data structures are not presented exactly in the framework as described above but can be easily put into this mode by being lazy. For example, in a pairing heap, the normal presentations of \opIns\ and \opDc\ cause an immediate pairing with the single existing root. However, such pairings can easily be deferred until the next \opEm, thus putting the resulting structure in our pure heap framework. 
For a more elaborate example of transforming a heap into one based on pairings, see \cite{DBLP:conf/wae/Fredman99}.

\subsection{Executing an \opEm} \label{suboperations}

To execute an \opEm, the minimum must be determined. 
In an \opEm, the forest of heaps must be combined into a single heap using an operation called \emph{pairing}, which takes two roots and attaches the root with larger key value as the leftmost child of the root with smaller key value.
(Note that while the pairing operation brings to mind pairing heaps, it is the fundamental building block of many heaps. Even the \emph{skew heap} \cite{DBLP:journals/siamcomp/SleatorT86}, which seems at first glance to not use anything that looks like the pairing operation, can be shown in all instances to be able to be transformed into a pairing-based structure at no decrease in cost \cite{DBLP:conf/wae/Fredman99}.)
In the execution of the \opEm\ operation, the use of some constant number $\pointers$ of pointers $\pointer_1, \pointer_2, \ldots \pointer_\rho$ is allowed. They are all initially set to the leftmost root. 
\begin{enumerate}

\item \subop{HasParent}{i}: \label{so:chpar} Return if the node pointed to by $\pointer_i$ has a parent.

\item \subop{HasLeftSibling}{i}: Return if the node pointed to by $\pointer_i$ has a left sibling.

\item \subop{HasRightSibling}{i}: Return if the node pointed to by $\pointer_i$ has a right sibling.

\item \subop{HasChildren}{i}: \label{so:chchild} Return if the node pointed to by $\pointer_i$ has any children.

\item \subop{Compare}{i,j}:\label{so:compare} Return if the key value in the node pointed to by $p_i$ is less than or equal to the key value in the node pointed to by $p_j$.

\item \subop{Pair}{i,j}:\label{so:pair} Perform a \emph{pairing} on two pointers $\pointer_i$ and $\pointer_j$ where the tree that $p_j$ points to is attached to $\pointer_i$ as its leftmost subtree. It is a required precondition of this suboperation that both $p_i$ and $p_j$ point to roots, and that this was verified by the \textsc{HasParent} suboperation, and that the key value in the node pointed to by $p_i$ is smaller than the key value in the node pointed to by $p_j$, and that is was verified by the \textsc{Compare} suboperation.

\item \subop{Unpair}{i}: \label{so:unpair} Disconnect the left child of the node $p_i$ points to and make it the left sibling of $p_i$. To use this suboperation, it is required that $p_i$ has children, and crucially that $p_i$ and its left child were paired with a \subop{Pair}{\cdot,\cdot} suboperation performed in the same \opEm\ as this one. Without this restriction, proof of the Lemma~\ref{lem:monotonerank} below would not hold.

\item \subop{Set}{i,j}: Set pointer $\pointer_i$ to point to the same node as $\pointer_j$.

\item \subop{Swap}{i,j}: Swap the heaps pointed to by $\pointer_i$ and $\pointer_j$.

\item \subop{MoveToParent}{i}: Move a pointer $\pointer_i$ to the parent of the node currently pointed to.
It is a precondition of this operation that the node that $\pointer_i$ points to has a parent, and that this was verified by the \textsc{HasParent} operation.

\item \subop{MoveToLeftmostChild}{i}: Move the pointer $\pointer_i$ to the leftmost child.
It is a precondition of this operation that the node that $\pointer_i$ points to has children, and that this was verified by the \textsc{HasChildren} operation.

\item \subop{MoveToRightSibling}{i}: Move the pointer $\pointer_i$ to the sibling to the right.
It is a precondition of this operation that the node that $\pointer_i$ points to has a right sibling, and that this was verified by the \textsc{HasRightSibling} operation.

\item \subop{MoveToLeftSibling}{i}: \label{so:mvleft} Move a pointer $\pointer_i$ to the sibling to the left
It is a required precondition of this operation that the node that $\pointer_i$ points to has a left sibling, and that this was verified by the \textsc{HasLeftSibling} operation.

\item \subop{End}{}: \label{so:end} Marks the end of the suboperation sequence for a particular \opEm.
It is a required precondition of this operation that the forest of heaps contains only one heap, and that this was verified through the use of the \subop{HasParent}{i}, \subop{HasLeftSibling}{i}, and \subop{HasRightSibling}{i} on a pointer $p_i$ that points to the unique root.

\end{enumerate}

Operations \ref{so:chpar}-\ref{so:chchild} return a boolean; the remainder have no return value.

The total number of suboperations, including the parameters, is defined to be $\subops$. Observe that $\subops=\Theta(\pointers^2)$, which is $\Theta(1)$ since $\rho$ is a constant. 
A sequence of suboperations is a \emph{valid} implementation of the \opEm\ operation if all the preconditions of each suboperation are met and the last suboperation is an \subop{end}{}.
In the pure heap model, the only thing that differentiates between different algorithms is in the choice of the suboperations to execute \opEm\ operations. In these operations it is the role of the particular \emph{heap algorithm} to specify which suboperations should be performed for each \opEm. There are no restrictions as to how an algorithm determines the suboperation sequence for each \opEm\ other than the suboperation sequence must be valid; the algorithm need not restrict its actions to information in the nodes visited in that operation.


\section{Lower Bound} 
\label{sec:lower_bound}

\begin{theorem}[Main Theorem] \label{thm:main}
In the pure heap model with a constant number of pointers, if \opEm{} and \opIns{} have an amortized cost of \bigoh{\log n}, then \opDc{} has an amortized cost of \bigom{\log\log n}.
\end{theorem}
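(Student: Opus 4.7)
The plan is to proceed by contradiction. Suppose an algorithm $\alg$ exists in the pure heap model whose \opEm{} and \opIns{} run in amortized time at most $\cem \log n$ while \opDc{} runs in amortized time $o(\log\log n)$. I will construct an explicit sequence of operations and argue that its total cost must exceed what these amortized bounds allow, yielding a contradiction. The sequence starts with $n$ \opIns{} operations and then iterates many \emph{rounds}, where each round consists of a carefully chosen batch of \opDc{} operations followed by a single \opEm. The role of the round structure is to allow an accounting in which work is attributed to either the \opDc{}s of that round or to the \opEm{} of that round, so that amortizing separately over the two operation types preserves the distinction between their claimed costs.

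The central technical device will be a rank function on roots, taking integer values in a range roughly $\{0,1,\ldots,O(\log n)\}$, designed to track how much combining work $\alg$ has invested in each subtree. Because every \opEm{} must end with the forest consisting of a single tree, and because \subop{Pair}{i,j} and \subop{Unpair}{i} are only legal after a \subop{Compare}{i,j} has been performed on the two pointers involved, the suboperation count of any \opEm{} admits a lower bound in terms of how much the ranks of surviving roots are forced to rise. The key monotonicity property --- that rank cannot be spuriously lowered across an \opEm{} --- is precisely what the restriction on \subop{Unpair}{i} in the suboperation list is designed to ensure. Rounds will then be classified as \emph{small} or \emph{large} according to whether the adversary's \opDc{}s disturb few or many roots: in small rounds, $\alg$ cannot afford to leave the disturbance unaddressed because future \opEm{}s would otherwise see many low-rank roots; in large rounds, the \opEm{} itself must pay for the unsorted roots. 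Balancing the cost across the two cases is calibrated by choosing the rank-range parameter, foreshadowed by the definition $\ozpexp = 1 + \log(32\cem + 32\rcm)$, so as to extract an $\Omega(\log\log n)$ amortized cost per \opDc.

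The main obstacle is designing the adversary schedule to defeat every possible rebalancing strategy of $\alg$ --- in particular, one that might try to blur the small/large dichotomy, or to shift work between rounds so that the cost appears amortized against the wrong operations. The adversary must choose \opDc{} targets and decreased key values so that the forest remains in a state where, no matter how $\alg$ allocates its (arbitrarily many) augmented bits per node, either some \opEm{} is forced into $\omega(\log n)$ work or \opDc{} itself pays $\Omega(\log\log n)$ amortized. Improving over the earlier $\Omega(\log\log n / \log\log\log n)$ bound of~\cite{DBLP:conf/icalp/IaconoO14} hinges on a sharper summation over rounds, where each round's contribution is charged through the rank potential rather than through a coarser entropy-style argument; this tightening is where the bulk of the technical work will be concentrated.
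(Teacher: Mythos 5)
Your plan diverges from the paper's proof at its foundation: you propose to construct a single explicit operation sequence and derive a contradiction from its total running time exceeding the claimed amortized bounds. The paper's argument is instead an information-theoretic one over \emph{sets} of operation sequences, and this is not a stylistic choice --- a single-sequence runtime argument cannot succeed here, precisely because of the issue you flag (``no matter how $\alg$ allocates its (arbitrarily many) augmented bits per node''). Against a fixed, even adaptively chosen, sequence, the algorithm can use unlimited augmented data to remember any useful history, and a potential/charging argument over one trajectory has no mechanism to prevent that. The paper's device is a set $\Ids$ of operation sequences that are algorithmically indistinguishable (the algorithm executes identical augmented suboperations with identical return values on all of them) yet terminal-structure distinct, and the tracked quantity is $\Distinct(\Ids)=\log|\Ids|$. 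The permutation evolution multiplies the set by a factorial in $\log n/\log\fdc$ (by appending all permutations of \opDc\ on a commonly located set of efficiently linked children of distinct ranks), while pruning to a common suboperation trace costs only $O(\acem_i+\log\log n)$ of distinctness per round. The contradiction is not that the runtime exceeds the amortized bound, but that $\Distinct$ would exceed the $O(n)$ bound on the number of size-$n$ heap structures (Lemma~\ref{lem:maxdistinct}). Your intended runtime bound does appear (Lemma~\ref{lem:totaltime}), but its role is only to cap the distinctness \emph{lost} to extract-min evolutions and, via Lemma~\ref{lem:fracsmall}, to show at least half the rounds are small; it is never used to produce a direct cost contradiction.

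A second, related misreading: you attribute the improvement over the $\Omega(\log\log n/\log\log\log n)$ bound of the earlier ICALP paper to ``a sharper summation over rounds, where each round's contribution is charged through the rank potential rather than through a coarser entropy-style argument.'' This has it backwards. The paper's whole proof \emph{is} an entropy-style counting argument --- distinctness is literally a bit count, and the bound $\Distinct=O(n)$ is a counting fact about Catalan numbers and markings. The improvement comes from a refined rank definition (with the tunable exponent $\ozpexp$ and the $\rce=2\fdc+1$ scaling), the consequent Lemma~\ref{lem:effcount} relating unmarked subtree size to the count of efficiently linked children, and a sharper accounting of the permutation evolution's net distinctness gain $\Theta(\log n\log\log n/\log\fdc)$ against losses of only $O(\acem_i)+O(\log\log n)$. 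A rank-potential charging scheme along the lines you sketch, without the set-of-sequences counting, does not recover the theorem and in particular cannot defeat unbounded per-node augmentation.
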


The proof will follow by contradiction. Assume that there is a pure heap model algorithm \alg{} where \opEm{} and \opIns{} have an amortized cost of at most $\cem \log n$, for some constant \cem{}, and \opDc{} has an amortized cost of at most \fdc, for some $\fdc=o(\log\log n)$.
The existence of the algorithm  \alg{},  the constant \cem{} and the function \fdc{} will be assumed in the definitions and lemmas that follow. A sufficiently large $n$ is also assumed.

\subsection{Preliminaries} 
\label{sub:preliminaries}

The proof is at its core an adversary argument. Such arguments look at what the algorithm has done and then decide what operations to do next in order to guarantee a high runtime. But, our argument is not straightforward as it works on sets of sequences of operations rather than a single operation sequence. There is a hierarchy of things we manipulate in our argument:

\begin{description}

\item[Suboperation.] The suboperations of \S\ref{suboperations} are the very basic unit-cost primitives that can be used to implement \opEm{}, the only operation that does not have constant actual cost. It is at this level that definitions have been made to enforce pointer model limitations.
\item[Operation.] We use \emph{operation} to refer to a priority queue operation. In this proof, the adversary will only use \opIns, \opDc, and \opEm.
\item[Sequence.] Operations are combined to form sequences of operations. 
\item[Set of operation sequences.] Our adversary does not just work with a single operation sequence but rather with sets of operation sequences. These sets are defined to have certain invariants on the heaps that result from running the sequences which bound the size of the sets of operation sequences under consideration.
\item[Evolution.] We use the word \emph{evolution} to refer to a function the adversary uses to take a set of operation sequences, and modify it. The modifications performed are to append operations to sequences, remove sequences, and to create more sequences by taking a single sequence and appending different operations to the end.
\item[Rounds.] Our evolutions are structured into \emph{rounds}.

\end{description}

The proof will start with a set containing a single operation sequence, and then perform rounds of evolutions on this set; the exact choice of evolutions to perform will depend on how the algorithm executes the sequences of operations in the set. The evolutions in a round are structured in such a way that most rounds increase the size of the set of operations. After sufficiently many rounds, an upper bound on the maximum size of the set of operation sequences will be exceeded, thus giving a contradiction.

In the remainder of Section~\ref{sub:preliminaries}, we provide numerous definitions and present a number of notions needed for the proof. 
In Section~\ref{sub:outline_of_the_proof}, we will give a more detailed outline of the proof. 
In Sections~\ref{sub:evolutions} and \ref{sub:rounds}, we describe our evolutions and rounds as defined above, respectively. 
Finally, in Sections~\ref{sub:upper_bound_on_time} and \ref{sub:obtaining_a_contradiction} we derive a contradiction to prove our result.

\subsubsection{Rank of a node} 
\label{ssub:rank_marked_unmarked_nodes}

As in many previous works on heaps and trees, the notion of the \emph{rank} of a node in the heap is vital. The rank of a node is meant to be a rough proxy for the logarithm of the size of the subtree of the node. 

While the basic analysis of pairing heaps and splay trees \cite{DBLP:journals/algorithmica/FredmanSST86,DBLP:journals/jacm/SleatorT85} use exactly this as the rank, the definition of rank here is more delicate and is an extension of that used in \cite{DBLP:journals/jacm/Fredman99}. As in \cite{DBLP:journals/jacm/Fredman99}, rank here is always a nonnegative integer. In previous definitions of rank, the value typically depended only on the current structure of the heap (One exception to this has been in order to get better bounds on \opIns, nodes are treated differently for potential purposes depending on whether or not they will ever be deleted. See \cite{DBLP:journals/cacm/StaskoV87,DBLP:journals/corr/abs-1110-4428,DBLP:conf/swat/Iacono00}). Here, however, the definition is more nuanced in that for the purposes of the analysis only, nodes are classified into \emph{marked} and \emph{unmarked} categories based on the history of the structure, and these marks, along with the current structure of the heap, are used to compute the rank of each node.

For ease of presentation, the rank of a node is defined in terms of the function $\rce=2\fdc+1$.

The general idea is to have the rank of a node be maintained so it is the negation of the key value stored in the node. (Ranks will be non-negative, and we will only give nodes non-positive integer key values; these can be assumed to be perturbed arbitrarily to give a total ordering of key values). A node's rank can increase as the result of a pairing, and a node's value can decrease as the result of a \opDc. It is thus our goal to perform a \opDc\ on a node which has had its rank increase to restore it to the negation of its rank. During the time between when a rank increase occurs in a node and the time the \opDc\ is performed, we refer to the node as \emph{marked}. 

Call the \emph{unmarked subtree} of a node to be the subtree of a node if all marked nodes were detached from their parents; the \emph{unmarked structure} of the heap is the structure of the unmarked subtrees of the roots. The rank of a node at a given time will be defined as a function of the structure of its unmarked subtree.

 We emphasize that the notion of marking a node is for the purposes of the analysis only, such marks need not be stored.

The following assumes a particular heap structure and marking, as the rank of a node is always defined with respect to the structure of the heap after executing a sequence of heap operations.

Let $\akey$ be the node we wish to compute the rank of. Let $\umc$ denote the number of unmarked children of $\akey$, and let $y_1,y_2,\ldots y_{\umc}$ denote these children numbered right-to-left (i.e., in the order which they became children of $\akey$).

Let $\tau_i(x)$ be a subtree of $x$ consisting of $x$ connected to only the subtrees induced by $y_1, y_2, \ldots y_i$. We will define the function $\rank_i(x)$ as a function of $\tau_i(x)$. The rank of a  node, $\rank(x)$ is $\rank_k(x)$.

Each node $y_i$ is said to be \emph{efficiently linked} to its parent $x$ if and only if
$ \rank_{i-1}(x)-\ozwin \leq \rank(y_i) \leq \rank_{i-1}(x)$.
The case of $ \rank(y_i) > \rank_{i-1}(x)$ will never occur, as pairings will only happen among unmarked nodes, where the rank perfectly matches the negation of the key value.

We will have the property that $\rank_i(x)$ is either $\rank_{i-1}(x)$ or $\rank_{i-1}(x)+1$; in the latter case, $y_i$ is called \emph{incremental}.

Given a node $y_i$, let $j$ be defined to be the index of the first incremental node in the sequence $\langle y_{i-1}, y_{i-2}, \ldots\rangle$; $j$ is defined to be 0 if there is no such incremental node.
The set $\noninc(y_i)$ is defined to be $\{ y_k | j<k \leq i\}$; that is, $y_i$ and the maximal set of its non-incremental siblings to the right.

Let \ozpexp{} be a constant to be determined later in Section~\ref{sub:obtaining_a_contradiction}. 
Given these preliminaries, we can now give the full definition of the rank of a node:

$$
\rank_i(x)=
\left\{
	\begin{array}{ll}
		0  & \mbox{if } i = 0 \\ \\
		r_{i-1}(x)+1 & \mbox{\parbox{4in}{(Efficient case) $y_i$ is efficiently linked and is the $\rce $th efficient element of $\noninc(y_i)$
		\\%
		---or--- \\
		(Default case) $|\noninc(y_i)|=\ftop {}$}}\\ \\
		r_{i-1}(x) & \mbox{otherwise}
	\end{array}
\right.
$$

While the rank and mark are interrelated, there is no circularity in their definitions---whether a node is marked depends on its rank and key value and the rank of a node is a function of the ranks and marks of its children.

\begin{obs}
\label{obs:structrank}
Given two nodes $x$ and $y$ with different ranks, the unmarked structure of their induced subtrees must be different.
\end{obs}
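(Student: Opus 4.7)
The plan is to prove the contrapositive: if two nodes $x$ and $y$ have induced subtrees with identical unmarked structure, then $\rank(x)=\rank(y)$. This is the natural statement to attack because the recursive definition of rank was explicitly engineered to read only from the unmarked part of the subtree, never from marked descendants, key values, or history. Establishing this functional dependence on unmarked structure gives the observation immediately.

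I would proceed by structural induction on the size of the unmarked subtree. For the base case, a node whose unmarked subtree is a single node has no unmarked children, so its rank is $\rank_0(\cdot)=0$ by the first clause of the definition; two nodes with this trivial unmarked structure therefore share rank $0$. For the inductive step, suppose $x$ and $y$ have isomorphic unmarked subtrees. Enumerate the unmarked children $y_1,\dots,y_k$ of $x$ right-to-left and correspondingly $y'_1,\dots,y'_k$ for $y$; by the isomorphism the unmarked subtree of $y_i$ is isomorphic to that of $y'_i$, so the inductive hypothesis gives $\rank(y_i)=\rank(y'_i)$ for every $i$.

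The remaining task is to verify that $\rank_i(x)$ depends only on quantities determined by the isomorphism together with the child ranks. I would unwind the definition one item at a time: the efficient-linkage predicate $\rank_{i-1}(x)-\ozwin \le \rank(y_i) \le \rank_{i-1}(x)$ is a function of $\rank_{i-1}(x)$ (available by a secondary induction on $i$) and $\rank(y_i)$; the set $\noninc(y_i)$ is obtained by walking leftward through $y_{i-1},y_{i-2},\dots$ until the first incremental sibling is found, and whether a sibling $y_j$ was incremental is recorded by whether $\rank_j(x)=\rank_{j-1}(x)+1$, which again is available from the secondary induction; the two triggering conditions, being the $\rce$th efficient element of $\noninc(y_i)$ or $|\noninc(y_i)|=\ftop$, are then just counting predicates over this already-determined data. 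Running this inner induction from $i=0$ up to $i=k$ yields $\rank_k(x)=\rank_k(y)$, and hence $\rank(x)=\rank(y)$.

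The only subtle point, and the one I would be most careful about, is the potential appearance of circularity: the rank of $x$ depends on the mark/efficient classification of each $y_i$, which depends on $\rank(y_i)$ and on $\rank_{i-1}(x)$. The paper already notes that there is no circularity, because $\rank(y_i)$ is defined purely from the subtree rooted at $y_i$ (which is strictly smaller, hence handled by the outer induction), while $\rank_{i-1}(x)$ is strictly earlier in the $i$-induction on $x$. I would state this stratification explicitly to make clear that the two nested inductions are well-founded, and then conclude the observation by contrapositive.
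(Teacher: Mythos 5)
Your proof is correct and takes essentially the same approach as the paper: the paper disposes of the observation in one line by asserting that rank is a function of the induced unmarked subtree, and your two-level induction (outer on unmarked subtree size, inner on the child index $i$ in $\rank_i(x)$) is simply a careful verification of exactly that functional dependence, including the stratification that resolves the apparent circularity between ranks and marks.
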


This follows directly from the fact that the rank of a node is a function of its induced unmarked subtree.

\begin{definition}
Set $\rcf=\frac{\ftop}{2\log\rcd}$ and $\rcg=\frac{\rcd}{2\log\rcd}$. 
\end{definition}

These two functions are defined for convenience. Note that $\rcf 
= o(\log n )$ easily since $\fdc=o(\log \log n)$ and \ozpexp{} is a constant.

Next, we present a key lemma that related the number of unmarked subtree size to the number of efficiently linked children. 

\begin{lemma}\label{lem:effcount}
Suppose a root with unmarked subtree of size $\rootsize$ has $\leq {\rcf  \log \rootsize}$ unmarked children. Then it has $\geq  {\rcg \log \rootsize}$ efficiently linked children. 
\end{lemma}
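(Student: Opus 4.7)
The plan is to decompose the unmarked children $y_1, \ldots, y_\umc$ of $x$ into chunks dictated by the rank recursion: each chunk is a maximal block of consecutive non-incremental children terminated by one incremental child (with a possible incomplete final block). An incremental child is created either by the efficient case (it is the $\rce$-th efficient element of its chunk, so the chunk contains at least $\rce$ efficient children) or by the default case (the chunk reaches size $\ftop$). Let $I_e$ and $I_d$ denote the numbers of efficient-case and default-case chunks; then $r(x) = I_e + I_d$, and the total number of efficient children of $x$ is at least $\rce \cdot I_e$.

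The hypothesis $\umc \leq \rcf \log \rootsize$ immediately controls the default count, since each default chunk consumes $\ftop$ unmarked children:
\[
I_d \;\leq\; \frac{\umc}{\ftop} \;\leq\; \frac{\rcf \log \rootsize}{\ftop} \;=\; \frac{\log \rootsize}{2 \log \rcd}.
\]
Once a matching lower bound on $r(x)$ is available, the remaining rank must come from efficient chunks and will convert into the desired efficient children.

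The main obstacle is establishing a lower bound of the form $r(x) \geq (\log \rootsize)/\log \rcd$. This is delicate because a node's rank does not in general bound its unmarked subtree size (a thin chain of nodes each with a single child can realize rank $0$ with arbitrarily large subtree). The plan is to exploit the narrowness hypothesis $\umc \leq \rcf \log \rootsize$ at $x$ together with Observation~\ref{obs:structrank}---which forces distinct ranks to produce distinct unmarked subtree shapes---and to analyze the subtrees rooted at the efficient children of $x$, whose ranks lie within $\ozpexp$ of $r_{i-1}(x)$ and therefore carry substantial structural content. Combined with the already-established cap on $I_d$, which limits how much of the size can be absorbed by low-rank ``filler'' children inside default chunks, this should yield a bound $\rootsize \leq \rcd^{O(r(x))}$, from which the desired $r(x) = \Omega((\log \rootsize)/\log \rcd)$ follows.

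Combining these ingredients closes the argument:
\[
I_e \;=\; r(x) - I_d \;\geq\; \frac{\log \rootsize}{\log \rcd} - \frac{\log \rootsize}{2 \log \rcd} \;=\; \frac{\log \rootsize}{2 \log \rcd},
\]
so the number of efficient children of $x$ is at least $\rce \cdot I_e \geq \rcg \log \rootsize$, as required.
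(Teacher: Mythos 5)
Your bookkeeping follows the same strategy as the paper: partition the rank increments at the root into efficient-case and default-case counts $I_e$ and $I_d$ with $r(x) = I_e + I_d$, bound $I_d \leq \umc / \ftop$ using the hypothesis on the number of unmarked children, observe that each efficient-case increment witnesses $\rcd$ efficiently linked children so the efficient-child count is at least $\rcd\, I_e$, and then close the argument once $r(x) = \Omega\bigl((\log \rootsize)/\log\rcd\bigr)$ is available. This is exactly the content of the paper's intermediate step (``if rank $\geq k$ and at most $\tfrac{k}{2}\ftop$ unmarked children, then at least $k\rcd/2$ efficiently linked children''), just phrased positively rather than by contrapositive, and the arithmetic at the end checks out.

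The genuine gap is the lower bound $r(x) \geq (\log \rootsize)/\log\rcd$, which you explicitly defer. The paper derives it by proving, via a recurrence on the maximal size $s_k$ of an unmarked heap of rank $k$, that $s_k \leq \rcd^{\,k}$; contrapositively, a node whose unmarked subtree has $\rootsize$ nodes has rank at least $\log_{\rcd}\rootsize$. That is the piece you should supply, and it does not need the narrowness hypothesis at all --- in the paper it is a standalone structural fact used \emph{before} the child-count hypothesis enters.

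Your proposed route to that fact will not close the gap as sketched. You want to combine the narrowness hypothesis $\umc \leq \rcf\log\rootsize$ at $x$ with the $I_d$ cap to rule out low-rank ``filler'' absorbing the size, but the narrowness hypothesis constrains only the root's children, not descendants, and the chain scenario you yourself raise has $\umc = 1$ at every level, so it passes every such test while keeping rank $0$. In other words, the narrowness hypothesis cannot be the engine behind the rank lower bound --- in the paper it only enters later, to control $I_d$, exactly as you use it. You are right that the rank-versus-size relationship is the delicate point, and you should verify that the paper's recurrence-based argument for $s_k\leq \rcd^{\,k}$ actually excludes the degenerate configurations you are worried about; but as written your proposal neither proves the bound nor gives a plan that would.
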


\begin{proof}
We first prove that the size of an unmarked subtree rooted at a node of rank $k$ is at most $\rce ^k$.
Let $s_k$ be the size of the maximum unmarked heap of rank $k$. Such a heap can be created from a maximal unmarked heap of rank $k-1$, which has been paired to $\rce-1$ maximal unmarked heaps of rank $k-1$ and $\ftop-(\rce-1)$ maximal unmarked heaps of rank $k-\nozwinmo$. Thus:
$$s_k \leq (\rce-1) s_{k-1}+(\ftop-(\rcd-1)) s_{k-\nozwinmo}.$$
By induction, 
$$s_k \leq (\rce-1) \rce ^{k-1}+(\ftop -\rcd+1)\rce ^{k-\nozwinmo} 
\leq
\rce ^{k} .$$ 

This implies that if there are $m$ nodes in a node $x$'s unmarked induced subtree, the rank of $x$ is at least $\log_{\rce} {m} $.

Suppose node $v$ has rank $\geq k$ and at most $ \frac{k}{2}\ftop$ unmarked children. Then, $v$ has at least $k\rcd/2$ efficiently linked unmarked children, each having rank $<k$.
To see this consider how the rank of a node is computed. 
If node $v$ has less than $k\rcd/2$ efficiently linked children, then the rank increases due to the efficient case must be less than $k/2$ by the definition of rank. This implies that at least $k/2$ rank increases will be due to the default case. This implies that there are at least $\frac{k}{2}\ftop$ unmarked children, a contradiction.

Consider a root with unmarked subtree size $\rootsize$
and thus rank at least $\log_{\rce } {\rootsize}$. Suppose it has $\leq \frac{\log_{\rce }{\rootsize}}{2}\ftop = \rcf \log m $ unmarked children. Then it has $\geq \frac{1}{2} \rcd\log_{\rce } m = \rcg \log m$ efficiently linked unmarked children by the argument in the previous paragraph.
\end{proof}

\subsubsection{Monotonic operation sequences} 
\label{ssub:monotonic_operation_sequences}
Call the \emph{designated minimum root} the next node to be removed in an \opEm.

\begin{definition}[Monotonic operation sequence] \label{def:monotonic}

Define a \emph{monotonic operation sequence} to be one where \opDc\ operations are only performed on roots, children of the designated minimum root, or marked nodes.

\end{definition}

All of the sequences of operations we define will be monotonic. 
Recall the invariant that the rank of a node is the negation of its key value. 
The following lemma tells us that when the rank of a node changes, in order to recover the invariant, we can call \opDc{} with a positive parameter $\Delta x$. 
The only case when this is not feasible is when the node is the designated minimum root. However, in this case the node will be removed by an extract-min operation. 

\begin{lemma}[Monotonic sequences and rank]
\label{lem:monotonerank} 
In a monotonic operation sequence, the rank of a node never decreases, from the time it is inserted until the time it becomes the designated minimum root.
\end{lemma}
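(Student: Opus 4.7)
The plan is to induct on the operations in the monotonic sequence and show that each of \opIns, \opDc, and the suboperations inside \opEm{} leaves $\rank(x)$ unchanged or increases it, for every node $x$ that has not yet become the designated minimum root. Since the rank is computed from the ordered sequence of unmarked children of $x$ together with their individual ranks, it suffices to track how that sequence can evolve.

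The \opIns\ case is immediate because the new node is placed as a new leftmost root and is disjoint from $x$'s subtree. For \opDc\ applied to a node $z$, I would case-split on the location of $z$. If $z$ is not a descendant of $x$, nothing in $x$'s subtree changes. If $z=x$, only $x$'s key is altered and the rank, depending only on structure, is unaffected. If $z$ is a proper descendant of $x$, Definition~\ref{def:monotonic} forces $z$ to be (a) a root, which is impossible since $z$ has ancestor $x$; (b) a child of the designated minimum root, which would force $x$ to coincide with or lie strictly above the designated minimum root, contradicting the hypothesis (the designated minimum root is itself a root, so the only ancestor in the tree is $x$ only when $x$ \emph{is} that root); or (c) marked, in which case $z$ is already detached from its parent in the unmarked structure, so cutting $z$ physically leaves $x$'s unmarked subtree untouched.

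The \opEm\ case is the main obstacle because its suboperation sequence can include \subop{Pair} and \subop{Unpair}, each of which can alter $x$'s children. Two observations suffice. First, a \subop{Pair} can only add a new leftmost child to $x$, and only when $x$ is a root; by direct inspection of the rank recurrence, $\rank_{k+1}(x) \geq \rank_k(x)$, so \subop{Pair} never lowers $\rank(x)$. Second, every unmarked child of $x$ that was present at the beginning of the \opEm{} remains an unmarked child of $x$ throughout: \opDc{} cannot cut it by monotonicity, and \subop{Unpair} cannot cut it because its precondition requires the pairing to have been performed in the same \opEm. For the same reason, such a child cannot become a root during the \opEm, hence cannot gain a new unmarked child of its own (which would require it to be a root during a \subop{Pair}), hence its own rank is frozen and it never becomes marked. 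The heart of the argument, and its main difficulty, is precisely this second observation: the \subop{Unpair} restriction is what prevents an unmarked child of $x$ from being detached mid-\opEm, gaining children as a root elsewhere in the forest, and either failing to return or returning with a modified (possibly marked) rank --- any of which could drop $\rank(x)$. With the restriction in place, the unmarked child sequence of $x$ at every moment is an append-only extension of its sequence at the start of the \opEm, each previously existing child has its rank preserved, and the rank recurrence then yields $\rank(x)$ non-decreasing throughout, completing the induction.
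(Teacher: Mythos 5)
Your proof is correct and follows the same underlying strategy as the paper's one-paragraph argument: show that the unmarked structure hanging below $x$ can only be \emph{extended} (by appending new leftmost children to $x$ while it is a root), never diminished, and then invoke the monotonicity of the rank recurrence $\rank_i(x) \in \{\rank_{i-1}(x), \rank_{i-1}(x)+1\}$. The paper's proof states the structural invariance as a bare assertion and then remarks (somewhat loosely) that the unmarked subtree ``does not change''; it never actually traces through which preconditions enforce the assertion, even though the paper flags elsewhere (in the description of \subop{Unpair}{i}) that this lemma would fail without the restriction that unpairings must undo pairings from the same \opEm. Your write-up supplies precisely that missing bookkeeping: the case split on \opDc\ using Definition~\ref{def:monotonic} (root / child of designated minimum root / marked), and, in the \opEm\ case, the explicit use of the \subop{Unpair} precondition to show that a pre-existing unmarked child of $x$ can neither be cut nor become a root, hence cannot acquire new children or become marked, so its own rank is frozen --- which is the recursive fact needed to conclude that $\rank(x)$ is computed on a strictly-growing prefix-preserving list of child ranks. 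In short: same approach, but you made explicit the mechanism the paper only gestures at, and that is where the actual content of the lemma lives.
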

\begin{proof}
In a monotonic operation sequence, for any node $x$ with descendent $y$ (at the beginning of an operation) where all nodes on the path from $x$'s child down to and including  $y$ are unmarked, $y$ will remain in the same location in $x$'s subtree (at the beginning of subsequent operations) until $x$ becomes the designated minimum root. 
Thus, since the rank of a node is a function of the structure of the unmarked subtree of the node which does not change in a monotonic operation sequence until the node becomes designated minimum root, the rank of a node never decreases until it becomes the designated minimum root. 
\end{proof}


\subsubsection{Augmented suboperations} 
\label{ssub:augmented_suboperations}

We augment suboperation~\ref{so:pair}, the \subop{Pair}\cdot\ operation, to return whether or not the rank was incremented as a result of the pairing. This augmentation does not give any more power to the pure heap model, since the exact ranks of all nodes is a function of the suboperation sequence and is thus known to the algorithm already.
We use this augmentation to create a finer notion of what constitutes a distinct sequence of suboperations. 
In particular, we will use the following fact:

\begin{lemma} \label{lem:subdistinct}
Suppose $s_i$ and $s_j$ are two structurally distinct states of the data structure. Suppose a single valid sequence of suboperations implementing an \opEm\ is performed on both, and the outcomes of all augmented suboperations that have return values are identical in both structures. Then, the position of all nodes who have had their ranks changed is identical in both.
\end{lemma}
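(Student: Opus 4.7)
The plan is to argue by induction on the suboperation sequence that the two executions evolve in lock-step at the level of \emph{positions} in the forest, even though the node identities at those positions differ. I would maintain throughout the execution the following invariant: for every pointer $p_k$, the node it currently points to in state $s_i$ sits at the same position (in the sense defined in Section~\ref{sec:the_pure_heap_model}) as the node it points to in $s_j$; moreover, the local forest structure around these pointers (parent edges, sibling edges, child edges) is isomorphic in both executions.

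I would verify this invariant by a straightforward case analysis on the thirteen suboperations. Initially all pointers sit at the leftmost root in both executions, so the invariant holds at the start. The query suboperations (\textsc{HasParent}, \textsc{HasLeftSibling}, \textsc{HasRightSibling}, \textsc{HasChildren}, \textsc{Compare}, and the rank-increment bit returned by the augmented \textsc{Pair}) do not move or modify anything, so they preserve the invariant trivially; the hypothesis of the lemma guarantees that their return values are identical in both executions, so the algorithm issues the same next suboperation in each. The motion suboperations (\textsc{MoveToParent}, \textsc{MoveToLeftmostChild}, \textsc{MoveToRightSibling}, \textsc{MoveToLeftSibling}, \textsc{Set}, \textsc{Swap}) move a pointer to a structurally-defined neighbour, so by the inductive structural isomorphism they reach corresponding positions in $s_i$ and $s_j$. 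The modifying suboperations (\textsc{Pair} and \textsc{Unpair}) reshape the forest at the position(s) named by their pointer arguments; since those pointers were at corresponding positions by induction, the structural change happens at corresponding positions in both executions, so the invariant is restored.

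Having established positional lock-step, I would finish by observing that a node's rank is a function of the structure of its unmarked induced subtree, and that during an \opEm\ this structure can only change as a direct consequence of a \textsc{Pair} or \textsc{Unpair} applied at (or immediately below) the node's position. By the invariant, the positions at which \textsc{Pair} and \textsc{Unpair} fire are the same in both executions, and the augmented return bit of \textsc{Pair} (identical in both by hypothesis) tells the adversary exactly which of these pairings actually increments the parent's rank. Hence the set of positions whose occupant experiences a rank change is the same in $s_i$ and $s_j$, which is precisely the statement of Lemma~\ref{lem:subdistinct}.

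The main obstacle, as I see it, is pinning down the right notion of ``structural correspondence'' to carry through the induction. In particular, one must be careful that the precondition on \textsc{Unpair} (that it only undoes a \textsc{Pair} from the same \opEm) keeps the two evolving subforests in an isomorphism that covers not only the pointer positions but enough of the surrounding shape for motion suboperations to land on corresponding nodes; and one must justify that rank changes truly are confined to positions where \textsc{Pair}/\textsc{Unpair} fires, using the rank definition of Section~\ref{ssub:rank_marked_unmarked_nodes} together with Lemma~\ref{lem:monotonerank} to rule out any ``action at a distance.''
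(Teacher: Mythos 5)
Your plan is correct and follows essentially the same route as the paper's proof: both arguments hinge on (i) the pointers in the two executions visiting corresponding positions because positions are measured right-to-left and the suboperation sequence together with its return values is fixed, and (ii) rank changes being confined to \textsc{Pair} (and its \textsc{Unpair} inverse), with the augmented increment bit making the rank-change decision explicit. The paper compresses this into a two-sentence observation, whereas you unpack it into an inductive invariant with a case analysis — the underlying ideas are identical, and your closing worry about pinning down the structural correspondence is precisely the point the paper glosses over with the phrase ``determined by a right to left order.''
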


\begin{proof}
Observe that as a valid sequence of suboperations implementing \opEm{} is performed on both $s_i$ and $s_j$, the labels of the pointers they perform each suboperation on are determined by a right to left order as defined in Section~\ref{sec:the_pure_heap_model} and are thus equivalent. In addition, the only time a node can have its rank change is when something is paired to it, and this is now explicitly part of the return value of suboperation~\ref{so:pair}.
\end{proof}


\subsubsection{Distinctness} 
\label{ssub:distinctness}

Let $\opB=\langle \opb_1, \opb_2, \ldots \rangle$   be a sequence of priority queue operations.
Let $\opA(\opb_i)=\langle \opa^i_1, \opa^i_2, \ldots \rangle$ be the sequence of augmented suboperations and their return values used by algorithm \alg\ to execute operation $\opb_i$ if $\opb_i$ is an \opEm ; if it is not $\opA(\opb_i)$ is defined to be the empty sequence. $\opA(\opB)$ is the concatenation of $\opA(\opb_1), \opA(\opb_2), \ldots$.

We call two sequences of priority queue operations $B$ and $B'$ \emph{algorithmically indistinguishable} if $A(B) =A(B')$, else they are \emph{algorithmically distinct}.

Let $\state_{\opB}(i)$ be the structure of the heap after running sequence $\langle \opb_1\ldots \opb_i\rangle$; the \emph{terminal structure} of $B$ is $\state_{\opB}(|\opB|)$ which we denote as $\state_{\opB}$. Recall that by structure, we mean the raw shape of the heap without regard to the data in each node, but including which nodes are marked. Two sequences $\opB$ and $\opB'$ are \emph{terminal-structure indistinguishable} if $\state_B = \state_{B'}$, else they are \emph{terminal-structure distinct}.
Given a set of mutually algorithmically indistinguishable and terminal-structure distinct (AI-TSD) sequences of heap operations $\Ids$, the \emph{distinctness} of the set, $\Distinct(\Ids)$ is defined to be $\log |\Ids|$.

Note that having two sequences which are algorithmically indistinguishable does not imply anything about them being terminal-structure indistinguishable. For example, it may be possible to add a \opDc\ to a sequence, changing the terminal structure, while the sequence of suboperations performed to execute the sequence remains unchanged. 
(Recall that suboperations only occur during \opEm\ operations).

A critical observation needed at the end of the proof is that the number of terminal-structure distinct sequences is function of $n$:

\begin{lemma} \label{lem:maxdistinct}
The maximum distinctness of any set $\Ids$ of terminal-structure distinct sequences, all of which have terminal structures of size $n$, is $\Distinct(\Ids)=\bigoh{n}$.
\end{lemma}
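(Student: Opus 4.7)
The plan is a straightforward counting argument. Since every pair of sequences in $\Ids$ is terminal-structure distinct, the cardinality $|\Ids|$ is bounded by the total number of distinct terminal structures on $n$ nodes. So it suffices to show that this number is $2^{O(n)}$, and then take logarithms.

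First I would unpack what a terminal structure consists of. By the definition given earlier, a structure is the raw shape of the ordered forest of heaps (determined by the parent, leftmost-child, and right-sibling pointers, ignoring all key values and other data) together with, for each of the $n$ nodes, a single bit recording whether that node is marked. Thus a terminal structure decomposes into two independent pieces of data: an unlabeled ordered forest on $n$ nodes, and an element of $\{0,1\}^n$.

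Next I would count each piece. The number of unlabeled ordered forests on $n$ nodes is a Catalan number and in particular is at most $4^n$; equivalently, an ordered forest can be encoded by a balanced parenthesis string of length $2n$, of which there are at most $2^{2n}$. The marking contributes a factor of at most $2^n$. Multiplying gives at most $2^{3n}$ possible terminal structures of size $n$, and so $|\Ids|\le 2^{3n}$. Therefore $\Distinct(\Ids)=\log|\Ids|\le 3n = \bigoh{n}$.

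There is essentially no hard step here; the only care required is to make sure the notion of ``structure'' in the lemma statement really is just ``ordered rooted forest shape plus one mark bit per node,'' and in particular does not carry any additional auxiliary information that would blow up the count. Once that is confirmed by the definitions in Section~\ref{ssub:distinctness} and the model description, the Catalan bound and the $2^n$ marking bound immediately yield the claim.
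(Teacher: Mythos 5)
Your proof is correct and follows exactly the same approach as the paper: decompose the terminal structure into the ordered-forest shape (counted by a Catalan number, bounded by $4^n$) and the marking bits ($2^n$ choices), multiply, and take the logarithm to get $\bigoh{n}$.
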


\begin{proof}
The number of different shapes of a rooted ordered forest with $n$ nodes is $C_n$, the $n$\superscript{th} Catalan number. The number of different ways to mark some nodes in a forest with $n$ nodes is $2^n$. Since $C_n \leq 4^n$, the maximum number of distinct structures is at most $\log (2^n4^n) = \bigoh{n}$. 
\end{proof}



\subsection{Outline of the proof} 
\label{sub:outline_of_the_proof}

In order to prove the result, we will show that assuming $\fdc = o(\log\log n)$ leads to a contradiction, namely that of Lemma~\ref{lem:maxdistinct}.
We obtain this contradiction using the help of evolutions which will be described in detail in Section~\ref{sub:evolutions}.

Evolutions take as input \aitsd{} sets of sequences of heap operations and modify them by appending operations to all sequences or by removing some sequences. 
Evolutions are performed in rounds with
the goal increasing the distinctness of the set of sequences of heap operations until a contradiction to Lemma~\ref{lem:maxdistinct} is obtained.

The only evolution that increases distinctness is the permutation evolution (Section~\ref{ssub:permutation_evolution}) which given an \aitsd{} set of sequences selects a set of nodes in the terminal structure of each sequence (let us call this set the permutable set), and appends sequences of \opDc{} operations applied to every permutation of these nodes to increase the size of the set. 
Translating this increase in the size of the set to a contradiction of Lemma~\ref{lem:maxdistinct} comes with a few challenges. 

First, the resulting set must be terminal-structure distinct. This is achieved by ensuring the permutable set consists of nodes with unique ranks, which by Observation~\ref{obs:structrank} guarantees that all the new sequences generated are mutually terminal-structure distinct. 

Second, the permutable set must be commonly located in the terminal structures of all sequences in the set. Thus, permutation evolution selects such a commonly located permutable set that is found in the terminal structures of a subset of the sequences, and removes the other sequences. This causes a decrease in distinctness. 

Third, in order to have a non-negligible effect on distinctness, the permutable set must be large enough. 
This is achieved by considering the number of roots, which depends on how the algorithm executed \opEm{} during the previous round. 
We classify the terminal structures of sequences based on the number of the roots, the position of the root with the largest subtree, as well as the number of its children (Section~\ref{ssub:big_small_evolution}). 
We pick the most common such root in all of the sequences and remove the rest of the sequences. This causes a decrease in distinctness. 
If the number of roots is small, this yields an adequate lower bound on the number of nodes in the largest subtree, and we perform permutation evolution.  
Otherwise, if the number of roots is large, we instead remove the root of the largest subtree by first performing a \opDc{} on that root so that it becomes the designated minimum root, and then calling \opEm{} (Section~\ref{ssub:designated_minimum_root_evolution}). 
This increases the number of roots further before an \opEm{} is called (Section~\ref{ssub:extract_min_evolution}). Intuitively, since \opEm{} has to at least pair all the roots and since we assumed an amortized upper bound of $\cem \log n$ on \opEm{}, we show that we cannot have a large number of roots in more than half of the rounds. 

Lastly, calling \opEm{} causes a number of issues. First, the algorithm can execute a different sequence of suboperations (each of different lengths) to implement \opEm{} on each sequence. 
We select the most common sequence of suboperations and remove all others. This ensures that the resulting set of sequences are still mutually algorithmically indistinguishable. It also causes a decrease in distinctness. 
In addition, \opEm{} can cause rank increases leading to marked nodes. 
By Lemma~\ref{lem:subdistinct}, the positions of these nodes in the terminal structures of all of the sequences are identical. Therefore, since by Lemma~\ref{lem:monotonerank} ranks never decrease, we can restore the invariant that rank of a node equals the negation of its key value by calling \opDc{} on each such node.

At the end, we show that the decreases in distinctness due to removing sequences are dominated by the increase in distinctness that occurs during the permutation evolution. We also show that the permutation evolution is invoked in at least half of the rounds (Lemma~\ref{lem:fracsmall}), implying that after sufficiently many rounds, the distinctness becomes $\omega(n)$, contradicting Lemma~\ref{lem:maxdistinct}, and thus showing that the only assumption we used in the proof, $\fdc = o(\log\log n)$, is false. 

Next, we present the evolutions in detail in Section~\ref{sub:evolutions} and show how to group them into rounds in Section~\ref{sub:rounds}. We analyze the running time of executing a sequence formed after $k$ rounds of evolutions in Section~\ref{sub:upper_bound_on_time}. Finally, we put it all together and prove the contradiction in Sections~\ref{sub:obtaining_a_contradiction}.


\subsection{Evolutions} 
\label{sub:evolutions}

We will now describe several functions on AI-TSD sets of heap operations; we call such functions \emph{evolutions}. The general idea is to append individual heap operations or small sequences of heap operations to all sequences in the input set $\Ids$ and remove some of the resulting sequences so as to maintain the property that the sequences in the resultant set of sequences $\Ids'$ are AI-TSD. The evolutions will also have the property that if the time to execute all sequences in $\Ids$ is identical, then the runtime to execute all sequences in $\Ids'$ will also be identical. The difference in the runtime to execute sequences in $\Ids'$ versus those in $\Ids$ will be called the \emph{runtime} of an evolution.

\subsubsection{Insert evolution} 
\label{ssub:insert_evolution}

The \emph{insert evolution} has the following form: $\Ids' = \eIns(\Ids)$.

In an insert evolution, a single \opIns\ operation of a key with value 0 is appended to the end of all $\Xi$ to obtain $\Xi'$. Given $\Ids$ is AI-TDS, the set $\Ids'$ is AI-TDS and trivially $\Distinct(\Ids)=\Distinct(\Ids')$. The runtime of the evolution is 1 since the added \opIns\ has runtime 1. The rank of the newly inserted node is 0, and is thus unmarked.


\subsubsection{Decrease-key evolution} 
\label{ssub:decrease_key_evolution}

The \emph{decrease-key evolution} has the following form: $\Ids' = \eDc(\Ids,p)$, where $p$ is a location which is either a root or a marked node in all terminal structures of sequences in $\Ids$. 

In a decrease key evolution, a \opDc$(p, \Delta x)$ operation is appended to the end of all sequences in  $\Xi$ to obtain $\Xi'$. 
The value of $\Delta x$ is chosen such that the new key value of what $p$ points to is set to the negation of its current rank; this means $\Delta x$ is always nonnegative because of the monotone property of ranks noted in Lemma~\ref{lem:monotonerank}. 
Observe that if $p$ points to a marked node, then it is unmarked after performing an \eDc. This requirement ensures that all structures that are distinct before this operation will remain distinct after the operation. Thus, the set $\Ids'$ is AI-TDS and trivially $\Distinct(\Ids)=\Distinct(\Ids')$. The runtime of the evolution is 1 since the added \opDc\ has runtime 1.


\subsubsection{Designated minimum root evolution} 
\label{ssub:designated_minimum_root_evolution}

Letting $\bigroot$ be the position of one root which exists in all terminal structures of $\Ids$, the  \emph{designated minimum root evolution} has the form $\Ids'=\eDmr(\Ids,r)$.

In a designated minimum root evolution, a \opDc\ operation on $\bigroot$ to a value of negative infinity is appended to all sequences in $\Ids$ to give $\Ids'$. It will always be the case that the (next) evolution performed on $\Ids'$ will be an extract-min evolution; the root $\bigroot$, which is known as the \emph{designated minimum root}, will be removed from all terminal structures of $\Ids'$ in this subsequent \eEm. There is no change in distinctness caused by this operation: $\Distinct(\Ids)=\Distinct(\Ids')$. The runtime of the evolution is 1 since the added \opDc\ has runtime 1.


\subsubsection{Extract-min evolution} 
\label{ssub:extract_min_evolution}

The \emph{extract-min evolution} has the form $(\Ids',\Vio,\acem)=\eEm(\Ids)$.

The extract-min evolution is more complex than those evolutions previously described, and the derivation of $\Ids'$ from $\Ids$ is done in several steps. 

First, an \opEm\ operation is appended to the end of all sequences in $\Ids$ to obtain an intermediate set of sequences which we call $\tent{\Ids}$. 
Recall that the \opEm\ operation is implemented by a number of suboperations.
There is no reason to assume that the suboperations executed by the algorithm in response to the \opEm\ in each of the elements of $\tent{\Ids}$ are the same; thus the set  $\tent{\Ids}$ may no longer be algorithmically indistinguishable. We fix this by removing selected sequences from the set $\tent{\Ids}$ so that the only ones that remain execute the appended \opEm\ by using identical sequences of suboperations. This is done by looking at the first suboperation executed in implementation of \opEm\ in each element of $\tent{\Ids}$, seeing which suboperation is the most common, and removing all those sequences  $\tent{\Ids}$ that do not use the most common first suboperation. If the suboperation is one which has a return value, the return value which is most common is selected and the remaining sequences are removed. This process is repeated for the second suboperation, etc., until the most common operation is \subop{End}{} and thus the end of all remaining suboperation sequences has been simultaneously reached. Since there are only a constant $\subops$ number of suboperations, and return values, if present, are boolean, at most a constant fraction (specifically $1-\frac{1}{2 \subops}$) of $\tent{\Ids}$ is removed while pruning each suboperation. At the end of processing each suboperation by pruning the number of sequences, the new set is returned as $\Ids'$. The set $\Ids'$ can be seen to be terminal-structure distinct, since pairing identically positioned roots in structurally different heaps, and having the same nodes win the pairings, can not make different structures the same. 

Observe that the nodes winning pairings in the execution of the $\opEm$ might have their ranks increase, and thus become marked. Moreover, due to Lemma~\ref{lem:subdistinct}, the position of all such nodes is identical in all terminal structures of $\Ids'$. The set of the locations of these newly marked  nodes is returned as $\Vio$, the \emph{violation set}.

Observe that it has been ensured that all of the sets of operations execute the appended \opEm\ using the same suboperations.  Thus, we 
define $\acem$ to be this common number of suboperations used to implement the \opEm; this value is returned by the evolution. As each suboperation reduces the distinctness by at most a constant,
$\Distinct(\Ids') \geq \Distinct(\Ids)-\acem \log ({2 \subops})=  \Distinct(\Ids)- \bigoh{\acem}$. The runtime of the evolution is $\acem$ since that is the cost of the added \opEm.


\subsubsection{Big/small evolution} 
\label{ssub:big_small_evolution}
The \emph{big/small evolution} has the form $(\Ids',(p,\text{\it{bigsmall}}))=\eBs(\Ids)$.

The goal of the big/small evolution is to ensure that the terminal structures of all sets are able to be executed in the same way in subsequent evolutions.
In a big/small evolution, each terminal structure of each of the operation sequences of $\Ids$ is classified according to the following, using the previously-defined function $\rcf$: 
\begin{itemize}

\item The exact number of roots (if at most $\rcf  \log n$) or the fact that the number of roots is greater than $\rcf  \log n$ (we call this case \emph{many-roots}). 

\item If the exact number of roots is at most ${\rcf  \log n}$:

\begin{itemize}

\item The position of the root with the largest subtree (the leftmost such root if there is a tie). Call it $p$. Observe that the size of $p$'s subtree is at least 
$\cst$.
\item The exact number of children of $p$ if less than ${\rcf  \log \cst}$ (we call this case \emph{small}) or the fact that the number of roots is greater than ${\rcf {} \log \cst}$ (we call this case \emph{root-with-many-children}).

\end{itemize}

\end{itemize}

There are at most $\lceil {\rcf {} \log n} \rceil^2 \cdot \lceil {\rcf {} \log \cst} \rceil$ possible classifications. We create set $\Ids'$ by removing from $\Ids$  sequences with all but the most common classification of their terminal structures. 
The return value is based on the resultant classification:
\begin{description}
\item[Many-roots:] Return $(p,\text{\it{bigsmall}})$ where $p=\text{\sc{Null}}$ and $\text{\it{bigsmall}}=\text{\sc{Big}}$.
\item[Root-with-many-children:] Return $(p,\text{\it{bigsmall}})$ where $p$ is the location of the root with the largest subtree and $\text{\it{bigsmall}}=\text{\sc{Big}}$.
\item[Small:] Return $(p,\text{\it{bigsmall}})$ where $p$ is the location of the root with the largest subtree and $\text{\it{bigsmall}}=\text{\sc{Small}}$.
\end{description}

We bound the loss of distinctness, which is the logarithm of the number of classifications.
 Since $\rcf=o(\log n)$, then $\log \left( \lceil {\rcf {} \log n} \rceil^2 \cdot \lceil {\rcf {} \log \cst} \rceil\right)= \bigoh{\log \log n}$, and thus $\Distinct(\Ids') = \Distinct(\Ids)- \bigoh{\log \log n}$. The evolution's runtime is 0 since no operations are added to any sequence.


\subsubsection{Permutation evolution} 
\label{ssub:permutation_evolution}

The \emph{permutation evolution} has the form $\Ids'=\ePerm (\Ids )$, where the leftmost root $\bigroot$ has in all terminal structures of the sequences of $\Ids$  a subtree size of at least $\frac{n  }{\rcf \log n}$ and at most ${\rcf  \log \cst}$ children; this will be achieved by being in the  small case of the big/small evolution and performing a decrease-key evolution on the relevant node.
It is required that all terminal structures of sequences in $\Ids$ are entirely unmarked.

In a permutation evolution, distinctness increases, and is the only evolution to increase the number of sequences in the process of converting $\Ids$ to $\Ids'$.

Combining the preconditions of the permutation evolution with Lemma~\ref{lem:effcount}, yields the fact that all nodes in the terminal structures of $\Ids$ at location $\bigroot$ have at least ${\rcg \log \cst}$ efficiently linked children; since there are at most $\ozwinpo \rce $ efficiently linked children of each rank (due to the definition of efficiently linked), that means there are at least  $\frac{\rcg \log \cst}{\ozwinpo\rce}$ efficiently linked children of different ranks in each terminal structure. Find such a set and call it the \emph{permutable set} (chose one arbitrarily if there is more than one possibility). Look at the position of these permutable sets in all terminal structures of the sequences of $\Ids$, and pick the position of the permutable set that is most common. Form the intermediate set of sequences $\hat{\Ids}$ by removing from $\Ids$ all sequences that do not have this commonly located permutable set.
Letting $F = f(n)\log\frac{n}{f(n)\log n}$ and $G=\frac{g(n)}{\ozwinpo \rce}\log\frac{n}{f(n)\log n}$,
an upper bound on the logarithm of the number of different locations a permutable set of size $G$ could be in is
\[
\log {F \choose G} = \Theta(G\log \frac{F}{G}) = \Theta\left(\frac{\log n}{\log\rcd}\cdot\log{\rcd}\right) = \Theta\left(\log n\right)
\]
given $f(n) = o(\log n)$ and $g(n) = \Theta(\rcd/\log\rcd)$. 
Thus this step decreases the distinctness of the set by at most the logarithm of the number of commonly located permutable sets:  
$$\Distinct(\hat{\Ids})-\Distinct(\Ids)=   - 
\bigoh{\log n}.$$

The permutable set is of size $\frac{\rcg \log \cst}{\ozwinpo\rce}$.  
Using the definitions of $\rcf$, $\rcg$, \ozwin, and $\fdc$, this is $\Theta\left(\frac{\log n}{ \log \fdc}\right)$. Let $\rcm$ be a constant such that the permutable set is of size at least $\frac{\rcm \log n}{ \log \fdc}$ for sufficiently large $n$. 
Recall that in all the terminal structures of the sequences of $\hat{\Ids}$, the ranks of the items in the permutable sets are different, and in fact are strictly increasing, when viewed right-to-left as children of $r$.

 We then create $\Ids'$ by replacing each sequence in ${\hat{\Ids}}$ with $\left(\frac{\rcm \log n}{ \log \fdc}\right)!$ new sequences created by appending onto the end of each existing sequence a sequence of all possible permutations of \opDc\ operations on all elements of an arbitrary subset of size $\frac{m \log n}{ \log \fdc}$ of the permutable set.

The fact that all of the sequences in ${\hat{\Ids}}$ have the same permutable sets ensures that all terminal structures in $\Ids'$ are terminal-structure distinct. 
 (Recall that Observation~\ref{obs:structrank} says that different ranks imply different structures of induced subtrees). 
Thus, in this step distinctness increases by $\Distinct(\Ids')-\Distinct({\hat{\Ids}})=\log \left( \frac{\rcm \log n}{ \log \fdc}  \right)! = \Theta\left( \frac{\log n \log \log n}{ \log \fdc} \right)$.

Thus, combining all the steps of the permutation evolution bounds the total increase of distinctness by
\begin{align*}
\Distinct(\Ids')-\Distinct(\Ids)&=
\Distinct(\Ids')-\Distinct(\hat{\Ids})+\Distinct(\hat{\Ids})-\Distinct(\Ids)\\
&=\Theta\left( \frac{\log n \log \log n}{ \log \fdc} \right) 
-\bigoh{\log n}\\
&=\Theta\left( \frac{\log n \log \log n}{ \log \fdc} \right) &&\text{since $\fdc = o(\log \log n)$}
\end{align*}

The cost of the evolution is at most $\frac{\rcm \log n}{ \log \fdc}$, since that is the number of \opDc\ operations appended to the sequences, and these all have unit cost.


\subsection{Rounds} 
\label{sub:rounds}


\begin{algorithm}
\caption{Algorithmic
 presentation of how evolutions are used to build the sequence of AI-TSD sequences 
 $\langle \Xi_0, \Xi_1, \Xi_2, \ldots \rangle$, which are split into rounds where the index of the start of round $i$ is $\circ_i$.
} 
\label{a:evolve}
\begin{algorithmic}
\State $\Ids_0=\{\langle \overbrace{\opIns(0), \opIns(0), \ldots, \opIns(0)}^{n\text{ \opIns\ operations}} \rangle \}$
\State $i=0$
\State $round=0$
\State $\round_0=0$;
\Loop
\State $(\Ids_{i},p,\text{\it{bigsmall}})= \eBs(\Ids_{i \pp})$; 
\If {$\text{\it{bigsmall}}=\text{\sc Small}$} 
\State $\Ids_{i} = \eDc(\Ids_{i \pp},p)$; \Comment{Small round}
\State $\Ids_{i}=\ePerm(\Ids_{i \pp})$; 
\Else 
\State $\Ids_{i}=\eDmr(\Ids_{i \pp},p)$; \Comment{Big round}
\EndIf
\State  $(\Ids_{i},\Vio,\acem)=\eEm(\Ids_{i \pp})$; \Comment{Common to small and big rounds}
\For{each $v$ in $\Vio$}
\State $\Ids_{i} = \eDc(\Ids_{i \pp},v)$;
\EndFor
\State $\Ids_{i} = \eIns(\Ids_{i \pp})$; 
\State $\round_{\pp round}=i$; 
\EndLoop
\end{algorithmic}

\end{algorithm}

We proceed to perform a sequence of evolutions $\Evo=\langle \evo_0, \evo_1, \ldots \rangle$ to define a sequence of AI-TSD sets $\langle \Ids_0, \Ids_1, \ldots \rangle$. The initial set $\Ids_0$ consists of a single sequence of operations: the operation \opIns$(0)$, executed $n$ times. Each subsequent AI-TSD  set $\Ids_{i}$ is derived from  $\Ids_{i-1}$  by performing the single evolution $\evo_{i-1}$; thus in general $\Ids_{i}$ is composed of some of the sequences of $\Ids_{i-1}$ with some operations appended. 

These evolutions are split into \emph{rounds}; $\round_i$ is the index of the first AI-TSD set of the $i$th round. Thus round $i$ begins with AI-TSD set $\Ids_{\round_i}$ and ends with $\Ids_{\round_{i+1}-1}$ through the use of evolutions
$\langle \evo_{\round_{i}} \ldots \evo_{\round_{i+1}-1}  \rangle$

 These rounds are constructed to maintain several invariants:
\begin{itemize}
\item
All terminal structures of all sequences in the AI-TSD set at the beginning and end of each round have size $n$. This holds as in each round, exactly one insert evolution and exactly one extract-min evolution is performed.
\item  All nodes in all terminal structures in the AI-TSD sets at the beginning and end of each round are unmarked.
\end{itemize}

There are two types of rounds, \emph{big rounds} and \emph{small rounds}. 
At the beginning of both types of rounds a big/small evolution is performed. The return value of the big/small evolution determines whether it will be a big or a small round.

The reader may refer to Algorithm~\ref{a:evolve} 
for a concise presentation of how evolutions are used to construct $\langle \Ids_0, \Ids_1, \Ids_2, \Ids_3, \ldots \rangle$. We now describe this process in detail.

\subsubsection{The Big Round} 
\label{ssub:the_big_round}

As the round begins, we know that the terminal structures of the AI-TSD set are entirely unmarked, and there are either at least ${\rcf  \log n}$ roots, or one root with at least ${\rcf {} \log \cst}$ children.  The round proceeds as follows:

\begin{enumerate}
\item Perform a designated minimum root evolution on the root with largest subtree; this is the node $\bigroot$ from the big/small evolution whose location is encoded in the return value; as a result of the big/small evolution it is guaranteed to be in the same location in all of the terminal structures of the sequences of $\Ids$. 
\item Perform an extract-min evolution. 
\item For each item in the violation sequence returned by the extract-min evolution, perform a decrease-key evolution. This makes the terminal structures of all heaps in $\Ids$ unmarked. 
\item Perform an insert evolution. 

\end{enumerate}

Assuming we are in round $i$, let $\acem_i$ be the cost of the extract-min evolution, and let $\vs_i$ be the size of the violation sequence.
The cost of the round (the sum of the costs of the evolutions) is $\acem_i + \vs_i+2$, which is at least ${\rcf {} \log \cst}$, and based on the evolutions performed the distinctness can be bounded as follows: 
$$\Distinct_i-\Distinct_{i+1}=\overbrace{\bigoh{\acem_i}}^\eEm +\overbrace{\bigoh{\log \log n}}^\eBs .$$


\subsubsection{The Small Round} 
\label{ssub:the_small_round}
In a small round, there is one root, call it $\akey$, at the same location in all terminal structures, with size at least $\frac{n  }{\rcf   \log n}$ and some identical number of children in all terminal structures which is at most ${\rcf {} \log \cst}$. The location of $\akey$ was returned by the big/small evolution. The round proceeds as follows:

\begin{enumerate}

\item Perform a decrease-key evolution on $\akey$ to make it negative infinity. 
\item  Perform an permutation evolution. 
\item Perform an extract-min evolution. 
\item For each item in the violation sequence returned by the extract-min evolution, perform a decrease-key evolution. 
\item Perform an insert evolution. 

\end{enumerate}

Let $\acem_i$ be the actual cost of the \opEm,  let $\vs_i$ be the size of the violation sequence.
The cost of the round is $\acem_i + \vs_i+2+\frac{\rcm \log n}{ \log \fdc}$, and based on the evolutions performed the distinctness can be bounded as follows: 
$$\Distinct_i-\Distinct_{i+1}=\overbrace{\bigoh{\acem_i}}^\eEm +\overbrace{\bigoh{\log \log n}}^\eBs-\overbrace{\varbigom{\frac{\log n \log \log n}{  \log \fdc}}}^\ePerm .$$



\subsection{Upper Bound on Time} 
\label{sub:upper_bound_on_time}

The following lemma is needed in the next section.

\begin{lemma}\label{lem:totaltime}
The total time to execute any sequence in $\Ids_{\round_k}$ is \varbigoh{\frac{\fdc}{\log \fdc}\cdot k\log n}.
\end{lemma}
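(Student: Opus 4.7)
The plan is to upper-bound the running time $R(B)$ of any $B \in \Ids_{\round_k}$ by counting the priority queue operations in $B$ and applying the assumed amortized cost bounds.

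First I would tally the operations. A big round appends $1$ \opIns, $1$ \opEm, and $1 + v_i$ \opDc s (from \eDmr{} and $v_i$ violation unmarkings). A small round appends $1$ \opIns, $1$ \opEm, and $1 + \tfrac{\rcm \log n}{\log \fdc} + v_i$ \opDc s (sending $\akey$ to $-\infty$, the \ePerm{} decrease-keys, and unmarking). Including the $n$ initial inserts, $B$ contains $n + k$ inserts, $k$ extract-mins, and at most $k + \tfrac{\rcm k \log n}{\log \fdc} + \sum_i v_i$ decrease-keys.

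The crucial step is bounding $\sum_i v_i$, the total number of markings over all extract-min executions. By Lemma~\ref{lem:monotonerank} each node's rank is non-decreasing from insertion until it becomes the designated minimum root, and each node added to some $\Vio_i$ witnesses a rank-increase event in the corresponding extract-min. The inductive argument at the start of the proof of Lemma~\ref{lem:effcount} shows that a node of rank $r$ has an unmarked subtree of size at most $\rcd^r$; since the heap holds at most $n$ nodes at any time, no node's rank ever exceeds $\log_\rcd n = \varbigoh{\log n / \log \fdc}$. Hence each of the at most $n + k$ nodes ever inserted contributes at most $\varbigoh{\log n / \log \fdc}$ rank-increases, yielding $\sum_i v_i = \varbigoh{(n + k) \log n / \log \fdc}$.

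Applying the amortized cost bounds ($\cem \log n$ for \opIns{} and \opEm, $\fdc$ for \opDc) then gives
\[
R(B) \;\le\; \cem \log n \,(n + 2k) + \fdc \cdot \Bigl( k + \tfrac{\rcm k \log n}{\log \fdc} + \varbigoh{\tfrac{(n+k) \log n}{\log \fdc}} \Bigr),
\]
and the claimed bound $\varbigoh{\tfrac{\fdc}{\log \fdc} \cdot k \log n}$ follows by routine simplification (with the initial-phase contribution absorbed as needed). The main obstacle is the bound on $\sum_i v_i$: a naive charging of each marking to the amortized cost of the extract-min that produced it would lose a $\log \fdc$ factor and not suffice, whereas the combination of rank monotonicity (Lemma~\ref{lem:monotonerank}) with the subtree-size-to-rank inequality (Lemma~\ref{lem:effcount}) gives the tight per-node marking count that the final calculation needs.
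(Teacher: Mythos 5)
The central step in your argument, bounding $\sum_i v_i$, has a real gap. You write that ``a node of rank $r$ has an unmarked subtree of size at most $\rcd^r$; since the heap holds at most $n$ nodes at any time, no node's rank ever exceeds $\log_\rcd n$.'' But the inequality $\text{size}\le \rcd^r$ goes the wrong way for this inference: it upper-bounds size in terms of rank (equivalently, lower-bounds rank in terms of size). From $\text{size}\le \rcd^r$ and $\text{size}\le n$ you cannot conclude $\rcd^r\le n$. Worse, the conclusion is not merely unproved --- it appears to be false. A node whose rank is already larger than $\ozwin$ can accumulate arbitrarily many rank-$0$ children, none of which is efficiently linked, and its rank then grows by one for every $\ftop$ such children via the default case; this allows ranks up to roughly $n/\ftop$. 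So there is no $O(\log_\rcd n)$ cap on rank, and the per-node charging scheme you build on it does not hold up. (There is also a secondary issue: even granting your bound, $\sum_i v_i = O((n+k)\log n/\log\fdc)$ carries an $n$-term whose ``absorption'' into $O(\tfrac{\fdc}{\log\fdc}k\log n)$ is not routine and would require $k=\Omega(n)$, which is not assumed.)

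The paper avoids this entirely with a different mechanism. Rather than trying to bound each node's total number of rank increases, it observes that \emph{each individual rank increase} costs at least $\rce$ pairings (since the efficient case requires the $\rce$-th efficient element, and the default case even more), and pairings are unit-cost suboperations. This gives $\sum_i \vs_i \le \runb/\rce$, with $\runb$ itself on the right-hand side. Plugging this into the amortized-cost inequality $\runb\le \sum_i((\dcc_i+\vs_i+1)\fdc + 2\cem\log n)$ and using $\sum_i\acem_i\le\runb$ yields a self-referential bound that can be solved for $\runb$; the choice $\rce=2\fdc+1$ makes the coefficient of $\runb$ on the right strictly less than $1$, so the fixed-point argument closes. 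That charging-to-pairings idea, together with solving for $\runb$, is the part your proposal is missing, and it sidesteps the need for any global rank bound.
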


\begin{proof}
Let $\opB$ be a sequence in $\Ids_{\round_k}$, and let $\runb$ be the time to execute $B$.
Let $\dcc_i$ be the cost of the permutation evolution in round $i$; this is at most $ \frac{\rcm \log n}{ \log \fdc}$ 
 if round $i$ is a small round and 0 if round $i$ is a big round (recall that permutation evolutions are only performed in the small round).
Thus the cost for any round $i$, whether big (\S\ref{ssub:the_big_round}) or small (\S\ref{ssub:the_small_round}), can be expressed as $2+\dcc_i +    \acem_i +  \vs_i$.
The time to execute any sequence $\opB \in \Ids_{\round_k}$, which we denote as \runb, is the sum of the costs of the rounds:

\begin{equation} \label{eq1}
 \runb =  \sum_{i=1}^k (2+\dcc_i +    \acem_i +  \vs_i) 
\end{equation}

An item in the violation sequence has had its rank increase. Its rank can only increase after winning $\rce$ pairings. Pairings are operations. Thus,

\begin{align}
 \sum_{i=1}^k \vs_i 
 &\leq \frac{\runb}{\rce }  \label{eq2}
\\
\intertext{Combining~\eqref{eq1} and~\eqref{eq2} and rearranging gives:}
\sum_{i=1}^k \vs_i &\leq \frac{1}{\rce}{\sum_{i=1}^k (2+\dcc_i +    \acem_i +  \vs_i)}
\nonumber
\\
\left(1-\frac{1}{\rce} \right)
\sum_{i=1}^k \vs_i &\leq \frac{1}{\rce}{\sum_{i=1}^k (2+\dcc_i +    \acem_i )}
\nonumber
\\
\sum_{i=1}^k \vs_i &\leq  \frac{1}{\rce -1}\sum_{i=1}^k (2+\dcc_i +   \acem_i )
\label{eq3}
\end{align}

We know by assumption that the runtime is bounded by the sum of the amortized costs stated at the beginning of \S\ref{sec:lower_bound}. Each round has one \opIns\ and one \opEm\ (at an amortized cost of $\cem \log n$ each) and $1+dc_i + \vs_i$ \opDc\ operations (at an amortized cost of $\fdc$ each). This gives:

\begin{equation} \label{eq4}
\runb\leq \sum_{i=1}^k (  (\dcc_i + \vs_i +1 ) {\fdc}  + 2 \cem \log n )
\end{equation}
Combining \eqref{eq3} and \eqref{eq4} gives:
\begin{align*}\label{eq5}
\runb & \leq \sum_{i=1}^k \left[ \left(  \dcc_i  +
\frac{2+\dcc_i +   \acem_i }{\rce -1} +1\right) \fdc
   + 2 \cem \log n \right]
\end{align*}
Since $\dcc_i \leq \frac{\rcm\log n}{\log \fdc}$ (recall that $\rcm$ is a constant defined in \S\ref{ssub:permutation_evolution}).

\begin{align*}
\runb &\leq \sum_{i=1}^k \left[ \left( 
\frac{2 }{\rce -1} +1 \right) \fdc +
 \frac{\acem_i \fdc}{\rce-1} 
   + 2 \cem\log n 
   + \left(1+\frac{1}{(\rce-1)} \right) \frac{m\fdc}{\log\fdc}\log n \right]
\\
\runb&\leq 
k \log n
\left[
\left(  \frac{2 }{\rce -1}+1 \right) \frac{\fdc}{\log n} 
   + 2\cem
   + \left( 1+\frac{1}{\rce-1}\right)  \frac{m\fdc}{\log\fdc} 
\right]
+
\frac{\fdc}{\rce-1}
\sum_{i=1}^k 
\acem_i 
\end{align*}

\noindent Since $\sum_{i=1}^k \acem_i \leq \runb$

\begin{align*}
\runb
\left(
1-\frac{\fdc}{\rce-1}
\right)
&\leq 
k \log n
\left[
\left(  \frac{2 }{\rce -1} +1\right) \frac{\fdc}{\log n} 
   + 2\cem
   + \left( 1+\frac{1}{\rce-1}\right)  \frac{m\fdc}{\log\fdc} \right]
\\
\intertext{Substituting in the definition of $\rce$: $\rce=2\fdc+1$}
\runb
\left(
1-\frac{1}{2}
\right)
&\leq 
k \log n
\left[
\left(  \frac{1 }{\fdc}+1 \right) \frac{\fdc}{\log n} 
   + 2\cem
   + \left( 1+\frac{1}{2\fdc}\right)  \frac{m\fdc}{\log\fdc} 
\right]
\\
\runb
&\leq 
2k \log n
\left[
\frac{1 }{\log n} 
+
\frac{\fdc}{\log n} +
2 \cem+
    \frac{m\fdc}{\log\fdc} 
    + \frac{m}{2\log\fdc} 
\right]
\end{align*}
Since $\fdc = o(\log\log n)$, for large enough $n$, 
\begin{equation} \label{eq:time}
\runb
\leq 
2k \log n
\left[
2 \cem+
    \frac{m\fdc}{\log\fdc} 
    + \frac{m}{2\log\fdc} 
\right]
< 
4k \log n
\left[
 \cem+
    \frac{m\fdc}{\log\fdc} 
\right].
\end{equation}
This yields $\runb = \varthet{\frac{\fdc}{\log\fdc}\cdot k \log n}.$

\end{proof}


\subsection{Obtaining a contradiction} 
\label{sub:obtaining_a_contradiction}

\begin{lemma}\label{lem:fracsmall}
After $k$ rounds, at least $\frac{k}{2}$ rounds must be small rounds.
\end{lemma}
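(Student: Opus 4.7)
The plan is to argue by contradiction: assume that strictly more than $k/2$ of the first $k$ rounds are big, and derive a violation of the runtime bound in Lemma~\ref{lem:totaltime}.

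First I would establish a uniform lower bound on the cost of a single big round. A big round is entered precisely when the big/small evolution detects either (i) more than $\rcf \log n$ roots in the terminal structure (many-roots), or (ii) a distinguished root with more than $\rcf \log \cst$ children (root-with-many-children). In either case, after the designated-minimum-root evolution removes at most one root, the subsequent extract-min evolution must combine the resulting forest — which still contains at least $\rcf \log \cst$ trees — into a single heap via pairings. Each pairing is a suboperation, so $\acem_i \geq \rcf \log \cst$. Because $\rcf = o(\log n)$, for all sufficiently large $n$ we have $\log \cst \geq (\log n)/2$, and hence every big round costs at least $\rcf \log n / 2$.

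Summing over the assumed majority of big rounds, the runtime $\runb$ of any $\opB \in \Ids_{\round_k}$ would satisfy $\runb > (k/2)\cdot(\rcf \log n / 2) = k \rcf \log n / 4$. On the other hand, inequality~\eqref{eq:time} from Lemma~\ref{lem:totaltime} gives $\runb \leq 4k \log n\,(\cem + \rcm\fdc/\log\fdc)$. For both to hold one needs
$$\rcf \;\leq\; 16\cem + 16\rcm\,\frac{\fdc}{\log\fdc}.$$

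The main obstacle — and precisely the reason the constant $\ozpexp = c = 1 + \log(32\cem + 32\rcm)$ was chosen as it was — is to show this inequality fails. Unwinding the definitions $\rcf = \rcd^{c}/(2\log \rcd)$ and $\rcd = 2\fdc+1$, a short computation gives $\rcf = \Theta(\rcd^{c-1}\cdot \fdc/\log\fdc)$ with a positive absolute constant for all $\rcd \geq 2$. The choice of $c$ yields $\rcd^{c-1} = (32\cem+32\rcm)^{\log \rcd} \geq 32(\cem+\rcm)$ whenever $\rcd \geq 2$, so $\rcf$ strictly dominates $16\cem + 16\rcm\fdc/\log\fdc$ for all sufficiently large $n$. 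This contradicts the displayed inequality, so at most $k/2$ rounds can be big and the remaining $\geq k/2$ rounds are small.
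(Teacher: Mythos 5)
Your proposal follows the same route as the paper's proof: assume for contradiction that more than $k/2$ of the first $k$ rounds are big, lower-bound each big round's cost by $\rcf \log \cst$ (hence, for large $n$, by roughly $\rcf \log n / 2$), sum over the big rounds, and compare the resulting total against the runtime bound~\eqref{eq:time} of Lemma~\ref{lem:totaltime}, at which point the choice of $\ozpexp = \ozexp$ delivers the contradiction --- this is exactly the paper's argument, with the constant-unwinding carried out in a slightly different but equivalent way. One detail to correct in your justification of the big-round cost bound: the designated-minimum-root evolution is a decrease-key that sets a root's key to $-\infty$ and does \emph{not} remove a root, and in the root-with-many-children subcase the forest handed to the subsequent extract-min evolution need not contain $\rcf\log\cst$ trees (the many children of $p$ only become roots \emph{after} that extract-min), so you should invoke the cost bound asserted in \S\ref{ssub:the_big_round} rather than re-derive it via a count of roots that holds only in the many-roots subcase.
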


\begin{proof}
Proof is by contradiction. Suppose more than $\frac{k}{2}$ rounds are big rounds.
The actual cost of a big round is at least ${\rcf {} \log \cst}$, so the total actual cost is greater than:

\begin{align*}
\runb & \geq \frac{k \rcf}{2} \log \cst \\
& = k \frac{\ftop}{4 \log \rce} \log \frac{2n \log \rce}{\ftop \log n}
& \text{since } \rcf=\frac{\ftop}{2\log \rce}
\intertext{since  $\rcd=2\fdc+1$ and $\fdc = o(\log \log n)$, for large enough $n$ we have}
& >  k \log n^{2/3} \cdot \frac{2^\ozpexp \fdc}{4 \log \fdc^{4/3}}  \\
& =  k\log n \cdot\frac{2^\ozpexp \fdc}{8 \log \fdc} \\
\intertext{Setting $\ozpexp = \ozexp$ yields}
& >  k\log n \left[\frac{32\cem \fdc}{8 \log \fdc} + \frac{32\rcm \fdc}{8 \log \fdc}\right]  \\
\runb & >  4k\log n \left[\cem  + \frac{\rcm\fdc}{\log \fdc}\right]  
\end{align*}
This contradicts Equation~\eqref{eq:time} in Lemma~\ref{lem:totaltime}. Thus, at least $\frac{k}{2}$ rounds must be small rounds.
\end{proof}

\begin{proof}[Proof of Theorem~\ref{thm:main}]
The distinctness gain of a round has been bounded as follows:
\begin{equation*}
 \Distinct_{\round_{i+1}}-\Distinct_{\round_i} =
\begin{cases}
-\bigoh{\acem_i}-\bigoh{\log \log n}  & \text{If the $i$th round is a big round (\S\ref{ssub:the_big_round})}\\
-\bigoh{\acem_i}-\bigoh{\log \log n} + \Theta \left( \frac{\log n \log \log n}{ \log \fdc} \right)
&\text{If the $i$th round is a small round (\S\ref{ssub:the_small_round})}
\end{cases}\label{eq5}
\end{equation*}
Now we know that $\sum_{i=1}^k \acem_i$ is less than the actual cost to execute a sequence in $\Ids_{\round_k}$, which is \varbigoh{\frac{\fdc}{\log \fdc}\cdot k\log n} by Lemma~\ref{lem:totaltime}. 
Substituting $\sum_{i=1}^k \acem_i = \varbigoh{\frac{\fdc}{\log \fdc}\cdot k\log n}$ into the above 
  and using the fact from Lemma~\ref{lem:fracsmall} that at least half of the rounds are small rounds gives:
\begin{equation} \label{eq6} 
\Distinct_{\round_k}-\Distinct_{\round_0} =  \varthet{ \frac{\log \log n}{ \log \fdc} \cdot k\log n}-\bigoh{k \log \log n}- \varbigoh{\frac{\fdc}{\log \fdc}\cdot k\log n}
\end{equation}
Since $\fdc=o(\log \log n)$, 
the negative terms in the previous equation can be absorbed, giving:
$$ \Distinct_{\round_k}-\Distinct_{\round_0} = \Theta \left( \frac{k\log n \log \log n}{ \log \fdc} \right) = \omega\left(k\cdot \frac{\log n\log\log n}{\log\log\log n}\right).
$$
But after sufficiently many rounds (i.e.~sufficiently large $k$) this contradicts Lemma~\ref{lem:maxdistinct} that for all $i$, $\Distinct_i = \bigoh{n}$. Thus for sufficiently large $k$ and $n$ a contradiction has been obtained, proving Theorem~\ref{thm:main}. 
\end{proof}




\bibliographystyle{abbrv}
\bibliography{dblpshort,bibshort}

\end{document}